\documentclass[conference]{IEEEtran}

\usepackage{graphics,
           psfrag,
           epsfig,
           amsthm,
           cite,
           amssymb,
           url,
           dsfont,
           subfigure,
           algorithm,
           algorithmic,
           balance,
           enumerate,
           color,
           setspace
}
\usepackage{amsmath}

\usepackage{epstopdf}

\newtheorem{definition}{Definition}

\newtheorem{theorem}{Theorem}
\newtheorem{corollary}{Corollary}

\newcommand{\eref}[1]{(\ref{#1})}
\newcommand{\sref}[1]{Section~\ref{#1}}

\newcommand{\fref}[1]{Figure~\ref{#1}}

\newcommand{\cref}[1]{Constraint~\ref{#1}}



\hyphenation{op-tical net-works semi-conduc-tor}

\newcommand{\ignore}[1]{}


\epsfxsize=3.0in
\pagestyle{plain}
\IEEEoverridecommandlockouts
%

\addtolength{\textfloatsep}{-2mm}
\setlength{\abovedisplayskip}{0.5mm}
\setlength{\belowdisplayskip}{0.5mm}
\setlength{\abovecaptionskip}{0.5mm}
\setlength{\belowcaptionskip}{0.5mm}
\setlength{\floatsep}{1mm}

\begin{document}
 
\title{A Game Theoretic Approach to Minimize the Completion Time of Network Coded Cooperative Data Exchange}
\author{
   \authorblockN{Ahmed Douik$^{\dagger}$, Sameh Sorour$^\ast$, Hamidou Tembine$^{\dagger}$, Mohamed-Slim Alouini$^\dagger$, and Tareq Y. Al-Naffouri$^{\dagger\ast}$\\}%
   \authorblockA{$^\dagger$King Abdullah University of Science and Technology (KAUST), Kingdom of Saudi Arabia \\
    $^\ast$King Fahd University of Petroleum and Minerals (KFUPM), Kingdom of Saudi Arabia \\
    Email: $^\dagger$\{ahmed.douik,hamidou.tembine,slim.alouini,tareq.alnaffouri\}@kaust.edu.sa \\
    $^\ast$\{samehsorour,naffouri\}@kfupm.edu.sa}
    }

\maketitle

\IEEEoverridecommandlockouts

\begin{abstract}
In this paper, we introduce a game theoretic framework for studying the problem of minimizing the completion time of instantly decodable network coding (IDNC) for cooperative data exchange (CDE) in decentralized wireless network. In this configuration, clients cooperate with each other to recover the erased packets without a central controller. Game theory is employed herein as a tool for improving the distributed solution by overcoming the need for a central controller or additional signaling in the system. We model the session by self-interested players in a non-cooperative potential game. The utility function is designed such that increasing individual payoff results in a collective behavior achieving both a desirable system performance in a shared network environment and the Pareto optimal solution. Through extensive simulations, our approach is compared to the best performance that could be found in the conventional point-to-multipoint (PMP) recovery process. Numerical results show that our formulation largely outperforms the conventional PMP scheme in most practical situations and achieves a lower delay.
\end{abstract}

\begin{keywords}
Cooperative data exchange, instantly decodable network coding, non-cooperative games, potential game, Nash equilibrium.
\end{keywords}

\section{Introduction}

\subsection{Network Coding}

Since its introduction in \cite{850663}, NC was shown to be a promising technique to significantly improve the throughput and delays of packet recovery especially in wireless erasure networks, due to the broadcast nature of their transmissions. These merits are essential for real time applications requiring reliable transmissions and fast recovery over erasure channels, such as multimedia streaming \cite{5753573}.

Two important classes of NC for such applications can be distinguished in the literature: the Random Network Coding (RNC) \cite{4015713,1228459} and the Opportunistic Network Coding (ONC) \cite{ref9,2383273}. RNC is implemented by combining packets with independent, random and non zero coefficients \cite{4015713}. Despite its attractive benefits such as optimality in number of transmissions for broadcast applications and ability to recover even without feedback \cite{1228459}, RNC is not suitable for the applications of our interest since it does not allow progressive decoding of the frame, is not optimal for multipoint to multipoint communications (multicast) and require expensive computation at the clients to decode the frame. In ONC, packet combinations are selected according to the received/lost packet state of each client \cite{2383273}. ONC was show to be a graceful solution for packet recovery for wireless network \cite{49413587}.

One ONC subclass that suits most of the aforementioned application is the instantly decodable network coding (IDNC) since it provides instant and progressive decoding of packets. IDNC can be implemented using binary XOR to encode and decode packets. Furthermore, no buffer is needed at the clients to store non instantly decodable packets for future decoding possibilities. Thanks to its merits, IDNC was an intensive subject of research \cite{ref2,ref3,ref6,refahmed,xiao1,refsameh,refjournal,xiao2,ref17,letterarxiv,ref12,ref18,5753573,6620795}. In \cite{ref4}, the authors studied the problem of minimizing the completion time in IDNC. The completion time was proofed to be related to the decoding delay in \cite{confarxiv} and can be better controlled with it. 

\subsection{Motivation and Related Work}

In all aforementioned works, the base station of a point-to-multipoint network (such as cellular, Wi-Fi and WiMAX and roadside to vehicle networks) was assumed to be responsible for the recovery of erased packets. This can pose a threat on the resources of such base stations and their abilities to deliver the required huge data rates especially in future wireless standards. This problem becomes more severe in roadside to vehicle networks since the vehicles usually bypass the roadside senders very fast and thus cannot rely on it for packet recovery but rather on completing the missing packets among themselves. One alternative to this problem is the notion of cooperative data exchange (CDE) introduced in \cite{6404743}. In this configuration, clients can cooperate to exchange data by sending IDNC recovery packets to each other over short range and more reliable communication channels, thus allowing the base station to serve other clients. This CDE model is also important for fast and reliable data communications over ad-hoc networks, such vehicular and sensor networks. Consequently, it is very important to study the minimization of delays and number of transmissions in such IDNC-based CDE systems.

Unlike conventional point-to-multipoint scenario, the IDNC based CDE systems require not only decisions on packet combinations but also on which client to send in every transmission in order to achieve a certain quality for one of the network metrics. Recently Aboutorab and al. \cite{6620795} considered the problem of minimizing the sum decoding delay for CDE in a centralized fashion. By centralized, we mean that a central unit (such as the base station in the cellular example) takes the decisions on which client to send which packet combination in each transmission.

\subsection{Contributions}

In this paper, we introduce a game theoretic framework for studying the problem of minimizing the completion time of instantly decodable network coding for cooperative data exchange in decentralized wireless network. The problem is modeled as cooperative control problem using game theory as a tool for improving the distributed solution by overcoming the need for a central controller or additional signaling in the system.

Cooperative control problems entail numerous autonomous players seeking to collectively achieve a global objective. The network coding problem is one example of a cooperative control problems, in which the global objective is for all players to efficiently use a common resource by opportunistically taking advantage of the possible coding occasions. The central challenge in cooperative control problems is to derive a local control mechanism for the individual players such that the players operate in a manner that collectively serves the desired global objective.

In this paper, we derive expressions for the individual utility functions in such way that increasing individual payoff results in a collective behavior achieving a desirable system performance in a shared network environment. We then improve the game formulation to include punishment policy and reduce the set of equilibrium to the one dimensional line containing the Pareto optimal solution of our interest. To the best of our knowledge, using game theory tools to model IDNC-based CDE has not been addressed in the literature and only heuristic algorithm were proposed to solve the problem in a centralized fashion \cite{6620795}. Moreover, this work can serve as a background to build more complicated system in which the feedback or the players state are not available.

The rest of this paper is divided as follows: Background about game theory and specially potential games is briefly recalled in \sref{sec:back}. In \sref{sec:model}, we present our network model and protocol. The game parameters, and formulation are presented in \sref{sec:game1}. The punishment policy and the new game formulation are provided in \sref{sec:game2}. In \sref{sec:algo}, we present the game equilibrium and the algorithm used to simulate the system. Before concluding in \sref{sec:concl}, simulations results are illustrated in \sref{sec:simul}.

\section{Background: Non-cooperative Games}\label{sec:back}

\subsection{Definitions and Notations}

We define a finite stochastic non-cooperative game, like in \cite{Lasaulce:2011:GTL:2086746}, with a common state by the 6-uplet:
\begin{align}
\mathcal{G}= (\mathcal{M},\{\overline{\mathcal{A}}_i\}_i,\Omega,\{\mathcal{A}_i\}_i,\{\alpha_i\}_i,q,\{\mathcal{U}_i\}_i) ,
\end{align} 
where:
\begin{itemize}
\item $\mathcal{M}=\{1,...,M\}$ is the set of players,
\item $\{\overline{\mathcal{A}}_i\}_i$ is the set of all possible actions during the course of the game,
\item $\Omega$ is the set of possible states of the game,
\item $\mathcal{A}_i(\omega)$ is the set of possible action for player $i$ in the state $w \in \Omega$ of the game,
\item $\alpha_i: \Omega \longrightarrow 2^{\overline{\mathcal{A}}_i}$ is the correspondence determining the possible actions at a given state of a game,
\item $q_t$ is the conditional distribution of the transition probability from state to state. For independent states games, $q$ is the distribution over the set $\Omega$ and can be ignored in the definition of the game. Otherwise the game is called competitive Markov decision process \cite{filar-competitiveMDP},
\item $\mathcal{U}_i$ is the utility function of player $i$, which will be defined further in the paper.
\end{itemize}

Let $\omega(t) \in \Omega$ be the state of the game at the stage $t$. For notation simplicity, the set of actions of player $i$ at stage $t$ will be denoted by $\mathcal{A}_i(t)$ instead of $\mathcal{A}_i(\omega(t))$. Let $\mathcal{A}(t) = \mathcal{A}_1(t) \times ... \times \mathcal{A}_M(t)$ be the set of all possible actions that can be taken by all the players at the stage $t$ of the game. 

For an action profile $\underline{a}_t=(a_1(t),...,a_M(t))^{\textbf{T}} \in \mathcal{A}(t)$, let $\underline{a}_{t,-i}$ denote the profile of players other than player $i$. In other words, $\underline{a}_{t,-i}=(a_1(t),...,a_{i-1}(t),a_{i+1}(t),...a_M(t))^{\textbf{T}}$. The subscript $^{\textbf{T}}$ denote the transpose operator. We can write a profile $\underline{a}_t$ of actions as $(\underline{a}_{t,i},\underline{a}_{t,-i})$. Similarly, the notation $\mathcal{A}_{-i}(t) = \prod\limits_{j \neq i} \mathcal{A}_j(t)$ refers to the set of possible actions of all the player other than player $i$ at stage $t$ of the game. Let $\underline{h}_t = (\boldsymbol{\omega}(1),\underline{a}(1),...,\underline{a}(t-1),\boldsymbol{\omega}(t) )^{\textbf{T}}$ be the history of the game at stage $t$ that lies in the set:
\begin{align}
\mathcal{H}_t = \left( \bigotimes_{t^\prime = 1}^{t-1} \Omega \times \mathcal{A}(t^\prime)  \right) \times \Omega.
\end{align}
The utility function $\mathcal{U}_i$ for player $i$ is defined as:
\begin{align}
\begin{tabular}{c|ccc}
$\mathcal{U}_i:$ & $\mathcal{A}(t) \times \mathcal{H}_t$ & $\longrightarrow$ & $\mathds{R}$  \\
 & $ (\underline{a}_t,\underline{h}_t)$ & $\longmapsto$ & $ \mathcal{U}_i(\underline{a}_t,\underline{h}_t),$
\end{tabular}
\end{align}
where in the notation $X \longrightarrow Y$, $X$ refers to the set of arguments and $Y$ the set of images by the function and $x \longmapsto f(x)$ gives the mapping of each argument.

We may write $\mathcal{U}_i(\underline{a}_t,\underline{h}_t)$ as $\mathcal{U}_i(\underline{a}_{t,i},\underline{a}_{t,-i},\underline{h}_t)$. For clarity purposes, the notation $\textbf{X}$ refers to a matrix whose $i$th column  is $\underline{X}_i$. The notation $\underline{x}$ refers to a vector whose $i$th entry is $x_i$. We denote by $\{0,1\}^{x\times y}$ the set of matrices of dimension $x \times y$ containing only $0$s and $1$s. We also use the notation $\{0,1\}^{x}$ to refer to the set $\{0,1\}^{x\times 1}$. The notation $[\underline{X}_1,...,\underline{X}_n]$ refers to the matrix whose $i$th column is the vector $\underline{X}_i$ and the notation $\textbf{X}=[x_{ij}]$ refers to a matrix $\textbf{X}$ whose $i$th row and $j$ column is the element $x_{ij}$. The game $\mathcal{G}$ is said to be finite if the number of times it is played is finite. For such games, let $T$ be the final stage of the game.

\subsection{Potential Games}

Potential games have been introduced in \cite{Monderer1996124}. The definition of a potential game $\mathcal{G}$ is given by:
\begin{definition}
The game $\mathcal{G}$ is an \emph{exact} potential game if there exists a function $\phi$ such that:
\begin{align}
&\hspace{2cm}\forall~ i \in \mathcal{M}, \forall~ t , \forall~ \underline{a}_t, \underline{a}_t^\prime \in \mathcal{A}(t) ,  \\
&\mathcal{U}_i(\underline{a}_t,\underline{h}_t)-\mathcal{U}_i(\underline{a}_{t,i}^\prime,\underline{a}_{t,-i},\underline{h}_t) = \nonumber \\
& \hspace{3.5cm}\phi(\underline{a}_t,\underline{h}_t)-\phi(\underline{a}_{t,i}^\prime,\underline{a}_{t,-i},\underline{h}_t).\nonumber
\end{align}
In other words, a game $\mathcal{G}$ is an \emph{exact} potential game if there is a function $\phi$ that measures exactly the difference in the utility due to unilaterally deviation of each player \cite{Lasaulce:2011:GTL:2086746}.
\end{definition}

Such a function $\phi$ is called the \emph{exact} potential of the game. Note that such potential does not directly guarantee the \textit{Pareto optimality} of the \textit{Nash Equilibrium} (see Cournot oligopolies \cite{Monderer1996124}). Both \textit{Pareto optimality} and \textit{Nash Equilibrium} will be defined in the next section. Instead of being a warranty of Pareto efficiency, the potential function can be seen as a quantification of the disagreement among the players \cite{Monderer1996124}. In the dynamic system \cite{1660950}, the potential represents a Lyapunov function of the game.

Note that other type of potential games can be found in the literature: the weighted, ordinal, generalized and best-response potential games \cite{Voorneveld2000289}. The theorems stated in the next section will not depend on the nature of the potential game considered and hold for all of them \cite{Lasaulce:2011:GTL:2086746} as far as the game is potential.

\subsection{Equilibrium Existence and Pareto Optimality}

The definition of the Pure Nash Equilibrium (NE) is the following:
\begin{definition}
An action profile $\underline{a}_t^* \in \mathcal{A}(t)$ is called a Pure Nash equilibrium if:
\begin{align}
\forall~i \in \mathcal{M},~ \mathcal{U}_i(\underline{a}_{t}^*,\underline{h}_t) = \underset{\underline{a}_{t,i} \in \mathcal{A}_i(t)}{\text{max}} \mathcal{U}_i(\underline{a}_{t,i},\underline{a}_{t,-i}^*,\underline{h}_t).
\end{align}
In other words, a NE is an action profile $\underline{a}_t^*$ in which no player can increase his utility by unilateral deviation. In engineering system, the NE is a stable point to operate \cite{jstor}.
\end{definition}

An attractive property of the NE is called the Pareto-Optimum Nash Equilibrium (PONE) which is defined as:
\begin{definition}
An action profile $\underline{a}_t^* \in \mathcal{A}(t)$ is called a Pareto-Optimum Nash Equilibrium if for all NE action profile $\underline{b}_{t}^*$, we have:
\begin{align}
\forall~i \in \mathcal{M},~ \mathcal{U}_i(\underline{a}_{t}^*,\underline{h}_t) \geq \mathcal{U}_i(\underline{b}_{t}^*,\underline{h}_t).
\end{align}
In other words, the PONE is the NE that achieves the highest utility for all the players among all the other NE.
\end{definition}

General results about equilibrium existence and uniqueness are provided in \cite{1965}. Since our problem is a cooperative control game, then we can make use of the results of the potential games \cite{4814554}. Before stating the main results concerning equilibrium of potential games, we first define the coordination game \cite{1660950}:
\begin{definition}
Let $\mathcal{G}= (\mathcal{M},\{\overline{\mathcal{A}}_i\}_i,\Omega,\{\mathcal{A}_i\}_i,\{\alpha_i\}_i,q,\{\mathcal{U}_i\}_i)$ be a potential game with a potential function $\phi$. The game $\mathcal{G}^\prime= (\mathcal{M},\{\overline{\mathcal{A}}_i\}_i,\Omega,\{\mathcal{A}_i\}_i,\{\alpha_i\}_i,q,\phi)$ is called the coordination game of $\mathcal{G}$.
\end{definition}
The following theorem gives the relationship in terms of equilibrium between the potential game and its associated coordination game:
\begin{theorem}
Let $\mathcal{G}$ be a potential game with potential $\phi$ and $\mathcal{G}^\prime$ its associated coordination game. Then the set of NEs of $\mathcal{G}$ coincides with the set of NEs of $\mathcal{G}^\prime$. Moreover, the actions profile $\underline{a}_t^* \in \mathcal{A}(t)$ maxima of $\phi$ are NE of $\mathcal{G}$.
\label{th1}
\end{theorem}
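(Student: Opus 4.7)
The plan is to exploit the defining identity of the exact potential function: for every player $i$, every stage $t$, and every pair $\underline{a}_t, \underline{a}_t^\prime \in \mathcal{A}(t)$ that differ only in their $i$-th coordinate, the change in $\mathcal{U}_i$ equals the change in $\phi$. This identity converts every ``unilateral improvement'' statement on $\mathcal{U}_i$ into the corresponding statement on $\phi$ and vice versa, and since in $\mathcal{G}^\prime$ each player's utility is $\phi$ itself, the two NE conditions become literally the same system of inequalities.

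First I would show $\mathrm{NE}(\mathcal{G}) \subseteq \mathrm{NE}(\mathcal{G}^\prime)$. Fix an NE $\underline{a}_t^*$ of $\mathcal{G}$, a player $i$, and an arbitrary alternative action $\underline{a}_{t,i}^\prime \in \mathcal{A}_i(t)$. The Nash condition for player $i$ in $\mathcal{G}$ gives $\mathcal{U}_i(\underline{a}_t^*,\underline{h}_t) \geq \mathcal{U}_i(\underline{a}_{t,i}^\prime,\underline{a}_{t,-i}^*,\underline{h}_t)$, and applying the potential identity to this pair of profiles (which differ only in coordinate $i$) immediately yields $\phi(\underline{a}_t^*,\underline{h}_t) \geq \phi(\underline{a}_{t,i}^\prime,\underline{a}_{t,-i}^*,\underline{h}_t)$. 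Because every player in $\mathcal{G}^\prime$ shares the utility $\phi$, this is exactly player $i$'s NE condition in $\mathcal{G}^\prime$, and quantifying over $i$ finishes one inclusion. The reverse inclusion $\mathrm{NE}(\mathcal{G}^\prime) \subseteq \mathrm{NE}(\mathcal{G})$ is obtained by running exactly the same chain of equivalences in the opposite direction, starting from $\phi(\underline{a}_t^*) \geq \phi(\underline{a}_{t,i}^\prime,\underline{a}_{t,-i}^*)$ and using the potential identity to recover a gain inequality for $\mathcal{U}_i$.

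For the second claim I would take $\underline{a}_t^*$ to be a (global) maximizer of $\phi(\cdot,\underline{h}_t)$ over $\mathcal{A}(t)$. Then for every player $i$ and every $\underline{a}_{t,i}^\prime \in \mathcal{A}_i(t)$, the maximality gives $\phi(\underline{a}_t^*,\underline{h}_t) \geq \phi(\underline{a}_{t,i}^\prime,\underline{a}_{t,-i}^*,\underline{h}_t)$, and the potential identity transfers this to $\mathcal{U}_i(\underline{a}_t^*,\underline{h}_t) \geq \mathcal{U}_i(\underline{a}_{t,i}^\prime,\underline{a}_{t,-i}^*,\underline{h}_t)$. Since this holds for every $i$ and every deviation, $\underline{a}_t^*$ is an NE of $\mathcal{G}$ (equivalently, by the first part, an NE of $\mathcal{G}^\prime$).

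The argument is mostly a careful unpacking of definitions, so there is no deep obstacle; the points that need attention are (i) that the potential identity only applies to profiles differing in a single coordinate, which is precisely the situation of an NE deviation check, and (ii) that the history $\underline{h}_t$ plays no active role because it is held fixed throughout every comparison at stage $t$. Noting these two points, the equivalence of NE sets and the fact that global maxima of $\phi$ are NE follow directly from the potential definition.
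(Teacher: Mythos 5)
Your proposal is correct. The paper does not supply its own argument for this theorem---it simply defers to the cited reference---and what you have written is precisely the standard proof that reference contains: the exact-potential identity turns each player's unilateral-deviation inequality on $\mathcal{U}_i$ into the identical inequality on $\phi$ and conversely, which gives the coincidence of the two NE sets, and the global-maximum claim is the special case where the inequality on $\phi$ holds by maximality. Your two cautionary remarks (that the identity is only needed for single-coordinate deviations, and that $\underline{h}_t$ is held fixed) are exactly the right points to flag; there is no gap.
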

\begin{proof}
The proof of this theorem can be found in \cite{1660950}.
\end{proof}
Note that the converse of this theorem is not generally true i.e. not all the NEs of $\mathcal{G}$ are maxima of $\phi$ \cite{1660950}. The existence of a NE in potential game is ensured by this corollary:
\begin{corollary}
Every finite potential game admits at least one NE. 
\label{cor}
\end{corollary}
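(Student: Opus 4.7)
The plan is to deduce the corollary directly from Theorem~\ref{th1}, reducing the existence of a Nash equilibrium to the existence of a maximizer of the potential function $\phi$. By Theorem~\ref{th1}, any action profile $\underline{a}_t^* \in \mathcal{A}(t)$ that maximizes $\phi(\cdot,\underline{h}_t)$ over $\mathcal{A}(t)$ is a NE of $\mathcal{G}$, so it suffices to exhibit such a maximizer at every stage $t$.

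To do this, I would first unpack what ``finite'' means in this context: the set of players $\mathcal{M}=\{1,\ldots,M\}$ is finite, each player's action set $\overline{\mathcal{A}}_i$ is finite, and the game is played for a finite number of stages $T$. Since $\mathcal{A}_i(t) = \alpha_i(\omega(t)) \subseteq \overline{\mathcal{A}}_i$ for each player $i$ and stage $t$, the joint action set $\mathcal{A}(t) = \mathcal{A}_1(t) \times \cdots \times \mathcal{A}_M(t)$ is a finite (and, without loss of generality, nonempty) subset of $\overline{\mathcal{A}}_1 \times \cdots \times \overline{\mathcal{A}}_M$.

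Fixing any history $\underline{h}_t \in \mathcal{H}_t$, the map $\underline{a}_t \mapsto \phi(\underline{a}_t,\underline{h}_t)$ is a real-valued function on the finite set $\mathcal{A}(t)$, hence attains its maximum at some profile $\underline{a}_t^* \in \mathcal{A}(t)$. Applying Theorem~\ref{th1} to this maximizer yields a NE of the coordination game $\mathcal{G}'$, which coincides with a NE of $\mathcal{G}$, completing the argument.

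There is really no hard step here; the only subtlety is making explicit that ``finite'' propagates from the primitives of $\mathcal{G}$ to finiteness of $\mathcal{A}(t)$ at each stage, so that the elementary existence of a maximum of a real-valued function on a finite set applies. Once that is observed, Theorem~\ref{th1} does all the remaining work.
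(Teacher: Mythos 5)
Your proof is correct and follows essentially the same route as the paper: invoke Theorem~\ref{th1} to reduce NE existence to the existence of a maximizer of $\phi$, and observe that a real-valued function on a finite action set attains its maximum. You merely make explicit the finiteness bookkeeping (that $\mathcal{A}(t)$ is finite) which the paper leaves implicit, which is a reasonable clarification given that the paper's stated definition of a ``finite'' game only mentions the number of stages.
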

\begin{proof}
The proof comes directly from Theorem \ref{th1}. For a finite game, the potential function is finite and therefore have at least one maximum and thus the game admits at least one NE.
\end{proof}
The following theorem characterize the PONE in a coordination game:
\begin{theorem}
Let $\mathcal{G}$ be a potential game with potential $\phi$ such that its corresponding coordination game is itself i.e. $\mathcal{G}=\mathcal{G}^\prime$ then the maximum of $\phi$ is the PONE of $\mathcal{G}$.
\label{th2}
\end{theorem}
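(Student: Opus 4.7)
The plan is to exploit the fact that the hypothesis $\mathcal{G}=\mathcal{G}^\prime$ forces $\mathcal{U}_i \equiv \phi$ for every player $i \in \mathcal{M}$, since in the coordination game every player's utility is, by definition, the common function $\phi$. This identification collapses the otherwise multi-objective comparison appearing in the definition of PONE into a scalar comparison of a single function, which is precisely the structure we want to use.

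First, I would fix an action profile $\underline{a}_t^*$ that maximizes $\phi(\cdot,\underline{h}_t)$ over $\mathcal{A}(t)$; for a finite game its existence is guaranteed exactly as in the argument behind Corollary \ref{cor}. By Theorem \ref{th1}, $\underline{a}_t^*$ is then automatically a NE of $\mathcal{G}$, so it is a legitimate candidate for being the PONE. Next, I would pick any NE $\underline{b}_t^*$ of $\mathcal{G}$ (necessarily an element of $\mathcal{A}(t)$) and chain
\begin{align*}
\mathcal{U}_i(\underline{a}_t^*,\underline{h}_t) = \phi(\underline{a}_t^*,\underline{h}_t) \geq \phi(\underline{b}_t^*,\underline{h}_t) = \mathcal{U}_i(\underline{b}_t^*,\underline{h}_t)
\end{align*}
for every $i \in \mathcal{M}$: the outer equalities come from $\mathcal{U}_i = \phi$ under the assumption $\mathcal{G}=\mathcal{G}^\prime$, and the middle inequality is the maximality of $\underline{a}_t^*$. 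Matching the definition of PONE then closes the argument.

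The only subtlety I anticipate is a bookkeeping one: I must take the maximum of $\phi$ over the full action set $\mathcal{A}(t)$ (which contains every competing NE $\underline{b}_t^*$) rather than over the set of equilibria, otherwise the central inequality would be circular. Beyond this notational care, there is no genuine obstacle; once the identity $\mathcal{U}_i = \phi$ is in place, the theorem is essentially a one-line consequence of Theorem \ref{th1}.
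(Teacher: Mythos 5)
Your proposal is correct and follows essentially the same route as the paper's own proof: invoke Theorem~\ref{th1} to certify that a maximizer of $\phi$ is a NE, then use the identification $\mathcal{U}_i \equiv \phi$ (forced by $\mathcal{G}=\mathcal{G}^\prime$) to conclude it dominates every other NE in every player's utility. Your version merely makes explicit the chain of equalities and the point that maximizing over all of $\mathcal{A}(t)$ (not just over the equilibria) is what makes the comparison non-circular, which the paper leaves implicit.
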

\begin{proof}
According to Theorem \ref{th1}, the maximum $\underline{a}_t^*$ of $\phi$ is a NE of $\mathcal{G}$ and since $\phi$ is the utility function of $\mathcal{G}$ therefore $\underline{a}_t^*$ is a NE that yields the highest utility. More specifically $\underline{a}_t^*$ yields the highest utility among all the NE of $\mathcal{G}$ and thus it is the PONE.
\end{proof}

\section{Network Model and Protocol} \label{sec:model}

\subsection{Network Model}

The network we consider in this paper consists of a set $\mathcal{M}=\{1,...,M\}$ geographically close clients (players) that require the reception of source packets that the base station (BS) holds. Each player is interested in receiving the frame $\mathcal{N}=\{1,...,N\}$ of source packets regardless of the order. 

In the first $n$ time slots, the BS broadcasts the $N$ source packets of the frame $\mathcal{N}$ uncoded. Each player $i$ is experiencing a packet erasure probability $q_i$ assumed to be constant during this phase. Each player listens to the transmitted packets and sends an acknowledgement (ACK) upon each successful reception of each packet. We assume that at the end of this \emph{initial} phase, each packet of the frame is at least acknowledged by one of the players. Otherwise, this packet is re-transmitted by the BS.

After this initialization phase, for each player $i$, the packets of the frame $\mathcal{N}$ can be in one of the following sets:
\begin{itemize}
\item The Has set (denoted by $\mathcal{H}_i$): The sets of packets successfully received by player $i$.
\item The Wants set (denoted by $\mathcal{W}_i$): The sets of packets that were erased at player $i$. Clearly, we have $\mathcal{W}_i = \mathcal{N} \setminus \mathcal{H}_i$.
\end{itemize}

In this configuration, we assume a perfect reception of the acknowledgement by all the players and that each player knows the packets sets of all the other players. Each player stores the information obtained after the transmission at time $(t-1)$ in a \emph{state matrix} (SM) $\textbf{S}(t) = [s_{ij}(t)],~ \forall~ i \in \mathcal{M},~ \forall~j \in \mathcal{N}$ such that:
\begin{align}
s_{ij}(t) =
\begin{cases}
0 \hspace{0.9 cm}& \text{if } j \in \mathcal{H}_i(t) \\
1 \hspace{0.9 cm}& \text{if } j \in \mathcal{W}_i(t)  .
\end{cases}
\end{align}

\subsection{Network Protocol}

After the initial transmission, the recovery phase starts. In this phase, the players cooperate to recover their missing packets by transmitting to each other binary XOR encoded packets of the source packets they already hold in order to minimize the completion time. 

The packet combination is chosen according to the available packets they have, the information available in the SM and the expected erasure patterns of the links. Let $\textbf{P} = [p_{ij}]$, $i,j\in\mathcal{M}$ denote the packet erasure probability (i.e. the probability to loss a packet) from player $j$ to player $i$. All the packet erasure probabilities are assumed to be constant during the transmission of the frame. Since the packet erasure probability depends not only on the link but also on the available power used to transmit, therefore $p_{ij}$ can be different from $p_{ji}$. We assume that each player knows all the packet erasure probabilities linking him to other players (i.e. player $i$ knows $p_{ji},~\forall~ j \in \mathcal{N}$) and that each transmission can be heard by all the players. Therefore only one player will transmit a packet combination at each time slot. Otherwise, due to interference between transmissions, none of the players will be able to decode a packet.

In this phase, the transmitted coded packets can be one of the following three options for each player $i$:
\begin{itemize}
\item \emph{Instantly Decodable:} A packet is instantly decodable for player $i$ if the encoded packet contain at most one packet the player does not have so far. In other words, it contains \emph{only one packet} from $\mathcal{W}_i$.
\item \emph{Non-Instantly Decodable:} A packet is non instantly decodable for player $i$ if it contains more than one packet missing for that player. In other words, it contains at least two packets from $\mathcal{W}_i$.
\item \emph{Non-innovative:} A packet is non-innovative for player $i$ if it do not allow him to reduce its Wants set. In other words, it does not contains packets from $\mathcal{W}_i$.
\end{itemize}
We define the conventional decoding delay \cite{ref2,ref3} as follows:
\begin{definition}
At any cooperative phase transmission, a player $i$, with non-empty Wants set, experiences a one unit increase of decoding delay if it successfully receivers a packet that is either non-innovative or non-instantly decodable.
\end{definition}
The cooperation decoding delay can be defined as:
\begin{definition}
At any cooperative phase transmission, a player $i$, with non-empty Wants set, experiences a one unit increase of decoding delay if not exactly one player transmitted or only one player transmitted and its conventional decoding delay increases.
\end{definition}

In other words, if more than one or none players transmits, all the players will experience a decoding delay and if player $i$ is the only transmitting player, he will experience a delay along with all players that successfully received a packet that is either non-innovative or non-instantly decodable. In the rest of the paper, we will use the term decoding delay to refer to the cooperative decoding delay. We define the targeted players by a transmission as the players that can instantly decode a packet from that transmission. After each transmission in the recovery phase, its targeted players send acknowledgements consisting of one bit indicating the successful reception. This process is repeated until all players report that they obtained all the packets. Let $T$ be final stage of the game. The definition of $T$ can be written as:
\begin{align}
T = \text{min }\{ t \in \mathds{N}^* \text{ such that } \textbf{S}(t) = \textbf{0} \}.
\end{align}

Since we assume single hop transmissions, which means that all the players are in the transmission range of each other, each of them can already overhear all the feedback sent by the other players and thus the system does not require any additional feedback load.

\section{Completion Time Game Formulation}\label{sec:game1}

In this section, we first introduce the game parameter to be able to model the problem of minimizing the completion time in IDNC as a non-cooperative potential game. We then provide the expression of the utility for the game.

\subsection{Game Parameters}

Let $\underline{\kappa}^i(t)$ be the optimal packet combination that player $i$ can generate at the stage $t$ of the game. We have $\underline{\kappa}^i(t) \in \{0,1\}^{N}$ with $\kappa^i_j(t)=1$ means that packet $j$ is included in the packet combination and $0$ otherwise. The mathematical expression of this combination can be found in \cite{confarxiv}. Since the SM is known by all players, therefore each player can compute the optimal packet combination (or a sub-optimal since the computation of the optimal was shown to be NP-hard) of all the other players.

In this game formulation, we assume complete and perfect information. The former assumption means that the actions available to the players and the utility functions are common knowledge i.e. every player knows the data of the game, every player knows that the other players know the data of the game, every player knows that every player knows that every players knows the data of the game, and so on, ad infinitum. The latter assumption means all the players know the history of the game perfectly. Note that these assumptions do not add extra constraints to the problem but are rather intrinsic to it.

At each stage of the game, each player has two possible actions: either he transmits or he listens. Therefore, we define the action space of player $i$ at each stage $t$ of the game as $\mathcal{A}_i(t) = \{ \text{transmit }\underline{\kappa}^i(t) , \text{remain silent}\}$. Note that the action space of the players are not symmetric since each player can transmit a different packet combination at each stage. Let $\underline{a}_t$ be the actions taken by all the users at the stage $t$. For simplicity of notation, we will define $\underline{a}_t$ as the following:
\begin{align}
\begin{tabular}{c|ccc}
$\underline{a}_t:$ & $\mathcal{A}(t)$ & $\longrightarrow$ & $\{0,1\}^{M}$  \\
 & $(a_1(t),...,a_M(t))$ & $\longmapsto$ & $ \underline{a}_t = (b_1(t),...,b_M(t))^{\textbf{T}},$
\end{tabular}
\end{align}
where,
\begin{align}
b_i(t) = 
\begin{cases}
0 \hspace{0.5cm}& \text{if } a_i(t) = \{\text{remain silent}\} \\
1 \hspace{0.5cm}& \text{otherwise}.
\end{cases}
\end{align}

The set of targeted players by a packet combination are those that can instantly decode an innovative packet from the combination. Let $\underline{\tau}_{\underline{\kappa}(t)}$ be the set of targeted player by the packet combination $\underline{\kappa}$ at the stage $t$ of the game. The mathematical definition of this set is given by:
\begin{align}
\begin{tabular}{c|ccc}
$\underline{\tau}:$ & $\{0,1\}^{N}$ & $\longrightarrow$ & $\{0,1\}^{M}$\\
 & $\underline{\kappa}(t)$ & $\longmapsto$ & $\underline{\tau}_{\underline{\kappa}(t)} = (\tau_1(\underline{\kappa}(t)),...,\tau_M(\underline{\kappa}(t))$,
\end{tabular}
\end{align}
where:
\begin{align}
\tau_i(\underline{\kappa}(t)) = 
\begin{cases}
1 \hspace{0.5cm}& \text{if } \sum_{j=1}^N s_{ij}(t)\kappa_j(t)=1 \\
0 \hspace{0.5cm}& \text{otherwise}.
\end{cases}
\end{align}

The players that experience a conventional decoding delay after a transmission are those with non-empty Wants set and are not targeted by the transmission. Define $\underline{\overline{\tau}}_{\underline{\kappa}(t)} = \underline{1} - \underline{\tau}_{\underline{\kappa}(t)}$ as the set of non targeted players and let $\underline{M}^w(t)$ be the set of players with non-empty Wants set defined as follows:
\begin{align}
\underline{M}^w_i(t) = 
\begin{cases}
0 \hspace{0.5cm}& \text{if } \sum_{j=1}^N s_{ij}(t)=0 \\
1 \hspace{0.5cm}& \text{otherwise}.
\end{cases}
\end{align}
The state $\boldsymbol{\omega}$ of the game is the erasure patterns of the links between each couple of players. These states can be described by the following formula:
\begin{align}
\begin{tabular}{c|ccc}
$\boldsymbol{\omega}:$ & $\mathds{N}^*$ & $\longrightarrow$ & $\Omega = \{0,1\}^{M \times M}$  \\
 &  $t$ & $\longmapsto$ & $\boldsymbol{\omega}(t) = [X_{ij}], 1 \leq i,j \leq M$,
\end{tabular}
\end{align}
where $X_{ij}$ is a Bernoulli random variable defined as follows:
\begin{align}
X_{ij} = 
\begin{cases}
0 \hspace{0.5cm}& \text{with probability } p_{ij} \\
1 \hspace{0.5cm}& \text{with probability } 1-p_{ij}.
\end{cases}
\end{align}

The $X_{ij}$ are independent of each other and therefore the $\boldsymbol{\omega}(t)$ are independent identically distributed (iid). The game can be seen as a random matrix game. We now can compute the decoding delay $\underline{\mathcal{D}}_{w,\kappa}$ experienced by all the users when user $i$ sends the packet combination $\kappa$ at the stage $t$ using the following expression:
\begin{align}
\begin{tabular}{c|ccc}
$\underline{\mathcal{D}}_{w_i,\kappa}(t):$ & $\{0,1\}^{M+N}$ & $\longrightarrow$ & $\{0,1\}^{M}$  \\
 & $(\underline{\omega}_i(t),\underline{\kappa}(t))$ & $\longmapsto$ & $\underline{\omega}_i(t) \circ \underline{\overline{\tau}}_{\underline{\kappa}(t)} \circ \underline{M}^w(t)$,
\end{tabular}
\label{qwer}
\end{align}
where $ \circ $ is the Hadamard product. We also define the cumulative decoding delay experienced by all players since the beginning of the recovery phase (beginning of the game) until the stage $t$ of the game:
\begin{align}
\begin{tabular}{c|ccc}
$\underline{\mathds{D}}:$ & $\{0,1\}^{M} \times \mathcal{H}_t$ & $\longrightarrow$ & $\{0,1\}^{M}$  \\
 & $ (\underline{a}_t,\underline{h}_t)$ & $\longmapsto$ & $ \underline{\mathds{D}}(\underline{a}_t,\underline{h}_t) $,
\end{tabular}
\end{align}
where:
\begin{align}
&\underline{\mathds{D}}(\underline{a}_t,\underline{h}_t) = \nonumber \\
&\begin{cases}
\underline{\mathds{D}}(\underline{a}_{t-1},\underline{h}_{t-1}) + \underline{M}^w(t) \hspace{0.2cm}& \text{if } ||\underline{a}_t||_1 \neq 1\\
\underline{\mathds{D}}(\underline{a}_{t-1},\underline{h}_{t-1}) + \underline{\mathcal{D}}_{w_i,\kappa^i}(t) \hspace{0.2cm}& \text{otherwise } \\
 \text{with } i \text{ such that } a_i(t) = ||\underline{a}_t||_1. &
\end{cases}
\end{align}

In the case of the Point to Multipoint (PMP) recovery process (when only the base station is transmitting), the completion time $\mathcal{C}_i$ can be approximated \cite{confarxiv} by the following expression:
\begin{align}
\tilde{\mathcal{C}}_i = \cfrac{\mathcal{W}_i + \mathds{D}_i - q_i}{1-q_i}.
\end{align}
Since in the CDE, all users may be transmitting to each other, then the completion time can be approximated by:
\begin{align}
\mathcal{C}_i = \cfrac{\mathcal{W}_i + \mathds{D}_i - \overline{p}_i}{1-\overline{p}_i},
\end{align}
where $\overline{p}_i$ is the average erasure probability linking the player $i$ to the other players. This average erasure probability can be expressed in terms of the erasure matrix as follows: $\overline{p}_i= \cfrac{||\underline{P}_i||_1}{M}$. Let $\underline{\mathcal{C}}$ be the vector of the completion times, $\underline{\mathcal{W}}$ the vector of the Wants sets and $\underline{\overline{p}}$ the one of the average erasures. Therefore, we define the expected completion time of each player as follows:
\begin{align}
\begin{tabular}{c|ccc}
$\underline{\mathcal{C}}:$ & $\{0,1\}^{M} \times \mathcal{H}_t$ & $\longrightarrow$ & $\{0,1\}^{M}$  \\
& $(\underline{a}_t,\underline{h}_t)$&$\longmapsto$&$(\underline{\mathcal{W}} + \underline{\mathds{D}}(\underline{a}_t,\underline{h}_t) - \underline{\overline{p}})./(1-\underline{\overline{p}}) $,
\end{tabular} 
\end{align}
where the operator $\underline{x}./\underline{y}$ refers to the division of each element of vector $\underline{x}$ by the element of vector $\underline{y}$. 

\subsection{Utility Functions}

In this first formulation, we take the cost (-utility) function to be the natural delay in the network.

\textbf{Game 1 (Completion time Game):}
\begin{itemize}
\item \textit{Players}: Users in set $\mathcal{M}$
\item \textit{History }: $\underline{h}_t=$ Channel realization $\boldsymbol{\omega}(t)$ and players' action $\underline{a}_t$ at each stage $t \geq 1$.
\item \textit{Strategies}: Contingency plans for selection transmission policy at each stage $t \geq 1$ and for any given history $\underline{h}_t$. 
\item \textit{Utilities}: $\mathcal{U}_i^{CT}$ for each player $i$, where at each stage $t \geq 1$ and for any given history $\underline{h}_t$ and action profile $\underline{a}_t$:
\end{itemize}
\begin{align}
\begin{tabular}{c|ccc}
$\mathcal{U}_i^{CT}:$ & $\{0,1\}^{M} \times \mathcal{H}_t$ & $\longrightarrow$ & $\mathds{R}$  \\
 & $  (\underline{a}_t,\underline{h}_t)$ & $\longmapsto$ & $ -||\underline{\mathcal{C}}(\underline{a}_t,\underline{h}_t)||_\infty$.
\end{tabular}
\end{align}

This game formulation is a non-cooperative stochastic potential game. However, this definition of the game suffers from many flaws. First, by inspection of the NE of the game (Appendix A), we clearly can see that most of the NE yield the worst payoff. This occurs when more than one player is transmitting. Without a punishment policy, the game can loop infinitely without reducing the Wants set of any player. Secondly, this multitude of NE will not make the system work in its best point which is the PONE. For these reasons, the overall performance of the game will be very poor and a more robust definition of the game must be addressed.

\section{Punishment and Pareto Optimality}\label{sec:game2}

\subsection{Punishment and Back-off Function}

In the first definition of the game, after a collision occurs, each of the players that transmitted can re-transmit in the next stage of the game. In order to overcome the scenario in which multiple consecutive collisions occur, we impose a punishment period of $V$ to every player responsible of a collision. In other words, players responsible of a collision will back-off and will not be able to transmit during the next $V$ transmissions. Let $\underline{c}_t = (c_1(t),...,c_M(t))^{\textbf{T}}\in \{0,1\}^M$ be the collision indicator defined as follows:
\begin{align}
c_i(t) =
\begin{cases}
1 \hspace{0.5cm}& \text{if } a_i(t) = 1 \text{ and } ||\underline{a}_t||_1>1 \\
0 \hspace{0.5cm}& \text{otherwise}.
\end{cases}
\end{align}

Let $\textbf{C}$ be the collision history over the last $V$ stage of the game. The mathematical definition of this variable is:
\begin{align}
\begin{tabular}{c|ccc}
$\textbf{C}:$ & $ \mathcal{H}_t$ & $\longrightarrow$ & $\{0,1\}^{M \times V}$  \\
 & $\underline{h}_t$ & $\longmapsto$ & $\textbf{C}(\underline{h}_t) = [\underline{c}_{t-V},...,\underline{c}_{t-1}]$.
\end{tabular}
\end{align}

For notation consistency, the collision indicator for a non positive time index is taken 0 i.e. $\underline{c}_{-t} = \underline{0},~\forall~t \geq 0$. The back-off function indicates at each stage $t$ of the game which players are allowed to transmit. The mathematical definition of this function is:
\begin{align}
\begin{tabular}{c|ccc}
$\underline{\mathcal{B}}:$ & $\{0,1\}^{M \times V}$ & $\longrightarrow$ & $\{0,1\}^{M}$  \\
 & $\textbf{C}(\underline{h}_t)$ & $\longmapsto$ & $ \underline{\mathcal{B}}(\underline{h}_t) = \textbf{C}(\underline{h}_t) \underline{1}$.
\end{tabular}
\end{align}
Let $\mathcal{A}_i^\prime(t)$ be the action space of player $i$ at each stage $t$ of the game defined as follows:
\begin{align}
\mathcal{A}_i^\prime(t) =
\begin{cases}
\mathcal{A}_i(t) \hspace{0.5cm}& \text{if } \underline{\mathcal{B}}_i(\underline{h}_t)=0 \\
\text{\{remain silent\}} \hspace{0.5cm}& \text{otherwise}.
\end{cases}
\end{align}

\subsection{New Game Formulation}

In order to encourage sparsity of the action profile vector $\underline{a}_t$, the $\ell_1$ regularizer is added to the previous definition of the utility function. Moreover, prioritization between players is added when all action profiles will yield a higher expected completion time than in the previous stage of the game. This additional term is scaled by the number of player to not change the original game. The new game formulation is:

\textbf{Game 2 (New completion time Game):}
\begin{itemize}
\item \textit{Players}: Users in set $\mathcal{M}$
\item \textit{History }: $\underline{h}_t=$ Channel realization $\boldsymbol{\omega}(t)$ and players' action $\underline{a}_t$ at each stage $t \geq 1$.
\item \textit{Strategies}: Contingency plans for selection of a transmission policy at each stage $t \geq 1$ and for any given history $\underline{h}_t$. 
\item \textit{Utilities}: $\mathcal{U}_i^{CT}$ for each player $i$, where at each stage $t \geq 1$ and for any given history $\underline{h}_t$ and action profile $\underline{a}_t$:
\end{itemize}
\begin{align}
\mathcal{U}_i^{CT}(\underline{a}_t,\underline{h}_t) &= -||\underline{\mathcal{C}}(\underline{a}_t,\underline{h}_t)||_\infty - ||\underline{a}_t||_1 \nonumber \\
& - \cfrac{||\underline{\mathds{D}}(\underline{a}_t,\underline{h}_t)-\underline{\mathds{D}}(\underline{a}_{t-1},\underline{h}_{t-1})||_1}{M}.
\end{align}
As for Game 1, Game 2 is a non-cooperative stochastic potential games. In the next section of the game, we will proof that this formulation effectively addresses the concerns of the first version.

\section{Game Equilibrium Analysis and Distributed Learning Algorithm}\label{sec:algo}

In virtue of Corollary \ref{cor}, there exist at least one NE. Moreover, according to Theorem \ref{th2}, the maximum of the utility function is the PONE of the game. However, the existence of the NE or the PONE is not sufficiency. We first define the price of anarchy, introduced in \cite{koutsoupias1999worst} to be able to characterize the equilibrium in a game. 
\begin{definition}
The price-of-anarchy (PoA), at stage $t$, is the worst-case efficiency of a Nash Equilibrium among all possible strategies. In other words, the PoA is defined as:
\begin{align}  
PoA(t) = \cfrac{{\text{max}_{s \in S(t)}} W(s)}{\text{min}_{s \in E(t)} W(s)},
\end{align}
where:
\begin{itemize}
\item $S(t)$ is the set of all possible strategies at stage $t$,
\item $E(t)$ is the set of all NE at stage $t$,
\item $W:S\longrightarrow\mathds{R}$ is a fairness function .
\end{itemize}
\end{definition}

The PoA is a concept that measures how the efficiency of a game degrades due to selfish behavior of its players. Since we have in our game the Pareto Equilibrium is a Nash equilibrium and the utility function is the same for all the players (the corresponding coordination game is the game itself), then the previous definition reduces to the following:
\begin{align}  
PoA(t) = \cfrac{{\text{max}_{s \in E(t)}} \phi(s)}{\text{min}_{s \in E(t)} \phi(s)}.
\end{align}

The utility function is always negative in our context. Therefore, the PoA is a well defined quantity. In this paper, since the utility function is strictly negative, we will compute the PoA using the cost function $\phi^\prime=-\phi$. The PoA can be expressed in terms of the cost function as follows:
\begin{align}  
PoA(t) = \cfrac{{\text{min}_{s \in E(t)}} \phi^\prime(s)}{\text{max}_{s \in E(t)} \phi^\prime(s)}.
\end{align}

\subsection{Game Equilibrium Analysis of Game 1}

The following theorem gives the \emph{PoA} of the Game 1:
\begin{theorem}
The \emph{PoA} of Game 1 can be expressed as follows:
\begin{align}
&PoA(t) = \nonumber \\
&\begin{cases}
1 \hspace{0.5cm} &\text{if } Z(t) = \varnothing \\
1 - \cfrac{Y_0(t) - \underset{j \in Z(t)}{min }(Y_j(t))}{\phi^\prime (\underline{a}_{t-1},\underline{h}_{t-1})+Y_0(t)} \hspace{0.5cm} &\text{otherwise },
\end{cases}
\end{align}
where $\phi^\prime=-\mathcal{U}_i^{CT},~\forall~i\in \mathcal{M}$ is the cost of the game and the set $Z$ is the set defined by
\begin{align}
Z(t) = \{ j \in \mathcal{M} \text{ such that }  Y_j(t) < Y_0(t) \},
\end{align}
and $Y_0(t) = \underset{i \in Q(t)}{max }\cfrac{1}{1-\overline{p}_i}$ is the increase in the cost function when not exactly one player is transmitting and $Y_j(t) = \underset{i \in Q(t) \cap \underline{\mathcal{D}}_{w_j,\kappa^j}(t)}{max }\cfrac{1}{1-\overline{p}_i}$ the cost when only player $j$ is transmitting with $Q(t)$ defined as:
\begin{align}
&Q(t) = \{ i \in \mathcal{M} \text{ such that }  \\
& \qquad \mathcal{C}_i(\underline{a}_{t-1},\underline{h}_{t-1}) + 1/(1-\overline{p}_i) > ||\underline{\mathcal{C}}(\underline{a}_{t-1},\underline{h}_{t-1})||_\infty \nonumber \\
& \hspace{2cm} \text{ and } M^w_i = 1\}. \nonumber
\end{align} 
\end{theorem}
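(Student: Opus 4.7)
The plan is to enumerate the Nash-equilibrium set $E(t)$ of Game 1 via one-shot unilateral deviation tests, evaluate the cost $\phi'(\underline{a}_t,\underline{h}_t)=||\underline{\mathcal{C}}(\underline{a}_t,\underline{h}_t)||_\infty$ at each equilibrium, and substitute the extrema into the simplified formula $PoA(t)=(\min_{s\in E(t)}\phi'(s))/(\max_{s\in E(t)}\phi'(s))$ already established in the paper for the coordination-game case.

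I would first classify action profiles by the transmitter count $||\underline{a}_t||_1\in\{0,1,2,\ge 3\}$ and, writing $\phi'_{-1}$ for $\phi'(\underline{a}_{t-1},\underline{h}_{t-1})$, derive the resulting stage cost from the $\underline{\mathds{D}}$-update rule together with the affine relation $\underline{\mathcal{C}}=(\underline{\mathcal{W}}+\underline{\mathds{D}}-\underline{\overline{p}})./(1-\underline{\overline{p}})$. When $||\underline{a}_t||_1\neq 1$, every player $i$ with $M^w_i=1$ sees $\mathcal{C}_i$ raised by $1/(1-\overline{p}_i)$, so only the set $Q(t)$ can carry the new maximum and the cost becomes $\phi'_{-1}+Y_0(t)$; when only player $j$ transmits, only the players in the support of $\underline{\mathcal{D}}_{w_j,\underline{\kappa}^j}(t)$ incur a unit delay, yielding cost $\phi'_{-1}+Y_j(t)$. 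Testing single-player deviations from each candidate profile then produces the complete NE classification: a single-$j$-transmit is a NE iff $Y_j(t)\le Y_0(t)$ (the transmitter backing off creates all-silent with cost $\phi'_{-1}+Y_0$, and any silent player activating creates a $2$-collision with the same cost); all-silent is a NE iff $Z(t)=\varnothing$ (otherwise any $j\in Z$ strictly improves by transmitting alone); a $2$-transmitter collision $\{i,j\}$ is a NE iff $i,j\notin Z(t)$; and any collision with $\ge 3$ transmitters is unconditionally a NE because any single deviation preserves the $\phi'_{-1}+Y_0$ class.

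Collecting, every NE with $||\underline{a}_t||_1\neq 1$ shares the cost $\phi'_{-1}+Y_0(t)$, while single-$j$-transmit NEs have cost $\phi'_{-1}+Y_j(t)\le\phi'_{-1}+Y_0(t)$. If $Z(t)=\varnothing$, every admissible single-transmit NE also satisfies $Y_j=Y_0$, so $\min\phi'=\max\phi'$ and $PoA(t)=1$. Otherwise the worst NE is any $\ge 3$-transmitter collision, giving denominator $\phi'_{-1}+Y_0(t)$, and the best NE is the single-$j$-transmit at the $j\in Z(t)$ that attains $\min_{j\in Z(t)}Y_j(t)$, giving numerator $\phi'_{-1}+\min_{j\in Z(t)}Y_j(t)$; dividing and rearranging yields the stated $1-(Y_0(t)-\min_{j\in Z(t)}Y_j(t))/(\phi'_{-1}+Y_0(t))$. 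The main obstacle is justifying that the $\ell_\infty$-norm jump is \emph{exactly} $Y_0(t)$ (and analogously $Y_j(t)$) rather than merely upper bounded by it; this requires verifying $\max_{i\in Q(t)}[\mathcal{C}_i(\underline{a}_{t-1},\underline{h}_{t-1})+1/(1-\overline{p}_i)]=||\underline{\mathcal{C}}(\underline{a}_{t-1},\underline{h}_{t-1})||_\infty+\max_{i\in Q(t)}1/(1-\overline{p}_i)$, i.e.\ that the element of $Q(t)$ with the largest $1/(1-\overline{p}_i)$ coincides with an argmax of $\underline{\mathcal{C}}$ at stage $t-1$. Degenerate cases such as $|\mathcal{M}|<3$ (no $\ge 3$-collisions available) or absence of a two-player subset outside $Z(t)$ must also be handled separately to ensure that the maximum-cost NE indeed equals $\phi'_{-1}+Y_0(t)$.
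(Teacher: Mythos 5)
Your proposal follows essentially the same route as the paper's Appendix A: decompose the cost into the previous cost plus a stage increment, show the increment is $Y_0(t)$ for any profile with $||\underline{a}_t||_1\neq 1$ and $Y_j(t)$ when only player $j$ transmits, classify the Nash equilibria by transmitter count (all single-transmitter profiles, all profiles with three or more transmitters, and two-transmitter collisions with both players outside $Z(t)$), and take the ratio of the extremal equilibrium costs. The one subtlety you flag --- that the $\ell_\infty$ jump equals $Y_0(t)$ exactly only when the maximizer of $1/(1-\overline{p}_i)$ over $Q(t)$ also attains $||\underline{\mathcal{C}}(\underline{a}_{t-1},\underline{h}_{t-1})||_\infty$ --- is left implicit in the paper's proof as well, so your argument is neither weaker nor stronger than the original on that point.
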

\begin{proof}
The proof of this Theorem can be found in Appendix A.
\end{proof}

\subsection{Equilibrium Investigation of Game 2}

The following theorem gives the \emph{PoA} of the Game 2:
\begin{theorem}
The \emph{PoA} of Game 6 can be expressed as follows:
\begin{align}
&PoA^\prime(t) = \\
&\begin{cases}
\cfrac{\underset{||\underline{a}_t||_1=a_i(t)=1}{\text{min}}\phi^\prime (\underline{a}_{t-1},\underline{h}_{t-1})+\cfrac{||\underline{\mathcal{D}}_{\omega_i,\kappa^i}||_1}{M} +1 + Y_i(t) }{\underset{||\underline{a}_t||_1=a_i(t)=1}{\text{max}}\phi^\prime (\underline{a}_{t-1},\underline{h}_{t-1})+\cfrac{||\underline{\mathcal{D}}_{\omega_i,\kappa^i}||_1}{M} +1 + Y_i(t)} \\
\hspace{6cm} \text{ if } Z(t) \neq \varnothing \\
\cfrac{\underset{||\underline{a}_t||_1=a_i(t)=1}{\text{min}}\phi^\prime (\underline{a}_{t-1},\underline{h}_{t-1})+\cfrac{||\underline{\mathcal{D}}_{\omega_i,\kappa^i}||_1}{M}  +1 + Y_0(t) }{\underset{||\underline{a}_t||_1=a_i(t)=1}{\text{max}}\phi^\prime (\underline{a}_{t-1},\underline{h}_{t-1})+\cfrac{||\underline{\mathcal{D}}_{\omega_i,\kappa^i}||_1}{M} +1 + Y_0(t)} \\
\hspace{6cm} \text{ otherwise }.
\end{cases} \nonumber
\end{align}
where $\phi^\prime=-\mathcal{U}_i^{CT},~\forall~i\in \mathcal{M}$ is the cost of the game and the set $Z$ is the set defined by
\begin{align}
Z(t) = \{ j \in \mathcal{M} \text{ such that }  Y_j(t) < Y_0(t) \},
\end{align}
and $Y_0(t) = \underset{i \in Q(t)}{max }\cfrac{1}{1-\overline{p}_i}$ is the increase in the cost function when not exactly one player is transmitting and $Y_j(t) = \underset{i \in Q(t) \cap \underline{\mathcal{D}}_{w_j,\kappa^j}(t)}{max }\cfrac{1}{1-\overline{p}_i}$ the cost when only player $j$ is transmitting with $Q(t)$ defined as:
\begin{align}
&Q(t) = \{ i \in \mathcal{M} \text{ such that }  \\
& \qquad \mathcal{C}_i(\underline{a}_{t-1},\underline{h}_{t-1}) + 1/(1-\overline{p}_i) > ||\underline{\mathcal{C}}(\underline{a}_{t-1},\underline{h}_{t-1})||_\infty \nonumber \\
& \hspace{2cm} \text{ and } M^w_i = 1\}. \nonumber
\end{align} 
Moreover, the \emph{PoA} can be bounded by the following expressions:
\begin{align}
&1 \geq PoA^\prime(t) \geq 
\begin{cases}
1 - \cfrac{1+Y_0(t)-\underset{j \in Z(t)}{min }(Y_j(t))}{\phi^\prime (\underline{a}_{t-1},\underline{h}_{t-1})+2+Y_0(t)}  \hspace{0.1cm} \\
\hspace{4cm} \text{if } Z(t) \neq \varnothing\\
1 - \cfrac{1}{\phi^\prime (\underline{a}_{t-1},\underline{h}_{t-1})+2+Y_0(t)} \hspace{0.1cm} \\
\hspace{4cm} \text{otherwise }.\\
\end{cases}
\end{align}
\end{theorem}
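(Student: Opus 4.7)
The plan is to mirror the Nash Equilibrium (NE) characterization used for Game 1 (proved in Appendix A) and to re-evaluate the cost function $\phi^\prime = -\mathcal{U}_i^{CT}$ once the two additional terms specific to Game 2 --- the sparsity regularizer $||\underline{a}_t||_1$ and the normalized decoding-delay increment $||\underline{\mathds{D}}(\underline{a}_t,\underline{h}_t)-\underline{\mathds{D}}(\underline{a}_{t-1},\underline{h}_{t-1})||_1/M$ --- are taken into account. Both extra terms shift the payoff of every action profile in a predictable, easily computed way, so the set of NEs of Game 2 and their costs can be derived from the Game 1 analysis essentially by inspection.

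First I would partition the action profiles into three classes: (i) the silent profile $\underline{a}_t=\underline{0}$, (ii) the single-transmitter profiles with $||\underline{a}_t||_1=1$, and (iii) the collision profiles with $||\underline{a}_t||_1>1$. By Theorem \ref{th1}, since Game 2 is a potential game whose coordination game is itself, the NEs coincide with the local maxima of $\phi=\mathcal{U}_i^{CT}$; given that each restricted action space $\mathcal{A}_i^\prime(t)$ has at most two elements, the NE condition reduces to a purely combinatorial unilateral-deviation check. The punishment mechanism from \sref{sec:game2} is the decisive new ingredient, since it forces every offender of a collision into the singleton action set $\{\text{remain silent}\}$ for the next $V$ stages, thereby removing all collision profiles from the NE set that would otherwise destabilize Game 1.

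Next I would evaluate $\phi^\prime$ on each surviving single-transmitter NE. If player $i$ alone transmits, then $||\underline{a}_t||_1$ contributes exactly $+1$, the Hadamard product in \eref{qwer} contributes $||\underline{\mathcal{D}}_{w_i,\kappa^i}||_1/M$ to the delay-increment term, and the completion-time norm $||\underline{\mathcal{C}}||_\infty$ inherits the $Y_i(t)$-or-$Y_0(t)$ behaviour already isolated in the proof of the Game 1 theorem, depending on whether a player of $Z(t)$ is actually targeted by $\underline{\kappa}^i(t)$. Substituting these three contributions into $PoA^\prime(t)=\min_{s\in E(t)}\phi^\prime(s)/\max_{s\in E(t)}\phi^\prime(s)$ yields the two-case closed-form expression stated in the theorem, with the minimum and the maximum running over eligible transmitters, i.e., over profiles with $||\underline{a}_t||_1=a_i(t)=1$.

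The lower bound on $PoA^\prime(t)$ is then a direct algebraic consequence. Writing $PoA^\prime(t)=1-(\max-\min)/\max$, I would upper bound the gap $\max-\min$ by $1+Y_0(t)-\min_{j\in Z(t)}Y_j(t)$ when $Z(t)\neq\varnothing$ and by $1$ otherwise, the $+1$ accounting for the variation of $||\underline{\mathcal{D}}_{w_i,\kappa^i}||_1/M\in[0,1]$ across the two extremal profiles and the rest coming directly from the Game 1 bound. The denominator is in turn lower bounded by $\phi^\prime(\underline{a}_{t-1},\underline{h}_{t-1})+2+Y_0(t)$ after dropping non-negative contributions. In my view, the main obstacle is not the algebra but the careful bookkeeping needed to verify that the punishment-restricted action space $\mathcal{A}_i^\prime(t)$ really eliminates collision profiles as candidate NEs, and that across the history the sets $Z(t)$ and $Q(t)$ used in the statement remain correctly computed from the back-off-modified eligibility set rather than from the full player set $\mathcal{M}$.
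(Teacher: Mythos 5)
Your overall architecture matches the paper's Appendix B: write $\phi^\prime(\underline{a}_t,\underline{h}_t)=\phi^\prime(\underline{a}_{t-1},\underline{h}_{t-1})+\xi(\underline{a}_{t},\underline{h}_t)$, classify action profiles by $||\underline{a}_t||_1$, characterize $E(t)$ through unilateral-deviation checks, and read off the PoA and its bounds as a ratio over $E(t)$. However, you misidentify the mechanism that removes the collision profiles from the NE set. The paper's proof does not use the back-off/punishment at all: the restricted action spaces $\mathcal{A}_i^\prime(t)$ never appear in Appendix B. What eliminates the profiles with $||\underline{a}_t||_1>1$ (and also $||\underline{a}_t||_1=0$) is the $\ell_1$ regularizer $-||\underline{a}_t||_1$ together with the normalized delay-increment term: a player transmitting in a collision can unilaterally switch to silence and strictly lower the cost, by $1$ if the residual profile is still a collision, and by $1+Y_0(t)-Y_i(t)+\left(||\underline{M}^w(t)||_1-||\underline{\mathcal{D}}_{w_i,\kappa^i}(t)||_1\right)/M$ if it becomes a single-transmitter profile. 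Your punishment-based argument fails on its own terms: the back-off only restricts the action sets of players who collided during the \emph{previous} $V$ stages, so at any stage with no recent collision (the first stage, say) every player retains the full action set, a simultaneous-transmission profile is feasible, and without invoking the regularizer you have no grounds to exclude it from $E(t)$ --- which is exactly the pathology that made Game 1's PoA poor.

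A second, smaller issue: you assert that by Theorem \ref{th1} the NEs coincide with the (local) maxima of the potential. Theorem \ref{th1} gives only one direction (maxima of $\phi$ are NEs), and the paper explicitly notes that the converse is false in general. Since the PoA is a ratio over the \emph{entire} NE set, you must rule out non-maximal NEs by direct deviation checks, which is what the appendix does class by class --- including showing that a lone transmitter $i\notin Z(t)$ is not a NE when $Z(t)\neq\varnothing$, which is why the min and max in the stated formula effectively range over $i\in Z(t)$ in that case. Your final bound derivation (gap at most $1+Y_0(t)-\min_{j\in Z(t)}Y_j(t)$, denominator at least $\phi^\prime(\underline{a}_{t-1},\underline{h}_{t-1})+2+Y_0(t)$) is consistent with the paper's once the NE set is correctly established, but it rests on that characterization.
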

\begin{proof}
The proof of this Theorem can be found in Appendix B.
\end{proof}

For stage $t$ of the game with $Z(t) = \varnothing$, the expected completion time will increase anyway and thus the \emph{PoA} do not have a practical signification in this case. In stage $t$ where $Z(t) \neq \varnothing$, we have:
\begin{align}
PoA^\prime(t) > PoA(t).
\end{align}
This new definition of the game offers a more efficient equilibrium. Therefore, for any learning algorithm and a long running period, Game 2 will performs better in terms of delay than Game 1. 

\subsection{Distributed Learning Algorithm}

In this paper, we will employ the best-response algorithm to simulate the system. In the original formulation, by Cournot \cite{cournot1838recherches}, players choose their actions sequentially. At each time slot, a player selects the action that is the best response to the action chosen by the other players in the previous time slot. Since the state of the game is not known to players, the utility function will be replaced by the expected utility function. This can be done by replacing the actual state $\underline{\omega}_i(t)$ by its expected value $\underline{P}_i(t)$ in \eref{qwer}. The following theorem characterize the outcome of the best-response algorithm for our games:

\begin{theorem}
For the second version of the game, the best-response algorithm will make the system operate in the PONE of the game.
\end{theorem}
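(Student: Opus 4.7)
The plan is to combine the potential-game structure of Game 2 with the finite-improvement property of Cournot best-response, and then use \thref{th2} to identify the resulting fixed point with the PONE. First I would verify that each summand of $\mathcal{U}_i^{CT}$ in Game 2, namely $-\|\underline{\mathcal{C}}(\underline{a}_t,\underline{h}_t)\|_\infty$, the $-\|\underline{a}_t\|_1$ collision penalty, and the $M$-scaled cumulative-delay increment, depends only on $(\underline{a}_t,\underline{h}_t)$ and not on the player index $i$. Consequently $\phi:=\mathcal{U}_i^{CT}$ is simultaneously the common utility and an exact potential, the coordination game $\mathcal{G}'$ coincides with $\mathcal{G}$, and \thref{th2} places the PONE at the stagewise maximizer of $\phi$, whose existence is guaranteed by \lref{cor}.

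Next I would invoke the finite-improvement property of potential games. Because each player's admissible action set $\mathcal{A}_i'(t)\subseteq\{0,1\}$ is finite and every unilateral best-response weakly increases the common potential, the Cournot sequence of joint actions traces a monotone trajectory on a finite state space and must reach a fixed point in finitely many iterations; by definition this fixed point is a pure Nash equilibrium. I would then use the three cost terms to characterize the NE set of Game 2: any profile with $\|\underline{a}_t\|_1\geq 2$ admits a strictly improving unilateral switch to silence, since this simultaneously lowers the $\ell_1$ penalty and removes the collision-induced contribution to the cumulative-delay increment. Hence every NE is either the all-silent profile or has exactly one transmitter.

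The main obstacle, on which I would spend the most care, is excluding convergence to a sub-Pareto single-transmitter NE, since several such profiles may coexist as local maxima of $\phi$. My plan here is to exploit the sequential nature of Cournot updates together with the back-off schedule of \sref{sec:game2}. Whenever the current transmitter $i$ is not the global single-transmitter maximizer of $\phi$, there exists an eligible (non-backed-off) player $k$ whose solo transmission yields a strictly larger $\phi$-value; during $k$'s next best-response round, transmitting alone dominates both the joint-transmission outcome (penalized by $\|\underline{a}_t\|_1=2$ and an additional collision delay) and the silent outcome, and in the following round $i$'s best-response becomes silence because the new potential exceeds its own single-transmitter value. Strict monotonicity of $\phi$ along improvement steps then forbids any return to the inferior configuration, so the dynamics settle at the unique global maximizer, which by \thref{th2} is the PONE. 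The delicate step is verifying that the back-off rule and the Cournot ordering coordinate the switchover between $i$ and $k$ without producing a cycle; once this is checked, together with an arbitrary tie-breaking rule when multiple transmitters achieve the maximal $\phi$, the theorem follows directly.
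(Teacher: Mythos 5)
Your setup is fine: the common utility makes $\mathcal{G}=\mathcal{G}^\prime$ so the stagewise maximizer of $\phi$ is the PONE, the finite-improvement property gives convergence of Cournot best-response to \emph{some} pure NE, and the $\ell_1$ penalty rules out profiles with two or more transmitters. The gap is exactly in the step you flag as delicate: escaping a sub-Pareto single-transmitter equilibrium. Under best-response dynamics the candidate better transmitter $k$ does not have ``transmit alone'' as an achievable outcome while the incumbent $i$ is still transmitting: $k$ controls only its own coordinate, so given $a_i(t)=1$ its two options produce either the collision profile with $\|\underline{a}_t\|_1=2$ (strictly worse, because of the $\|\underline{a}_t\|_1$ term and the extra collision-delay term) or the status-quo single-$i$ profile. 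Its best response is therefore silence, and the dynamics are absorbed at the inferior equilibrium. This is not a technicality you can patch locally: by the paper's own Appendix~B characterization, \emph{every} profile with $\|\underline{a}_t\|_1=a_i(t)=1$ and $i\in Z(t)$ is a NE, i.e., admits no improving unilateral deviation, so no myopic improvement path can leave it. Your argument, if it worked, would show such profiles are not NE, contradicting that characterization.

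The paper's proof sidesteps this by never reasoning about leaving an already-reached equilibrium; it reasons about which equilibrium the sequential order of play selects in the first place. Players best-respond in index order; if the eventual sole transmitter $j$ differed from the PONE transmitter $i$, then either $j$ moves before $i$, in which case transmitting was not $j$'s best choice (remaining silent and letting $i$ transmit yields $j$ a higher payoff, since the PONE maximizes the common utility), or $j$ moves after $i$, in which case $i$ is already transmitting and $j$'s best response is silence; either way $j=i$. To repair your route you would need an analogous statement about the \emph{selection} made in the first sweep of sequential best responses (together with a tie-breaking convention), not a claim that the dynamics can climb out of a non-Pareto NE --- they cannot.
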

\begin{proof}
To proof this theorem, we fist introduce the following theorem:
\begin{theorem}
Let $\mathcal{G}$ be a best-reply potential game with $\mathcal{V}$ a best response potential. If the action $\underline{a}_t^*$ maximizes $\mathcal{V}$, then $\underline{a}_t^*$ is a NE.
\label{thhhh}
\end{theorem}
\begin{proof}
The proof of this theorem can be found in \cite{tembine2012distributed}.
\end{proof}
From our previous analysis, the NE of the second game are located on the one dimensional line in which only one player is transmitting. Let $\underline{a}_t^*$ be the PONE of the game such that $||\underline{a}_t^*||_1=a_i(t)=1$. Assume that the outcome of the best-response algorithm is the action profile $\underline{a}_t^\prime \neq \underline{a}_t^*$. In virtue of Theorem \ref{thhhh} and our previous analysis, the action profile will have $||\underline{a}_t^\prime||_1=a_j(t)=1, j\neq i$. For simplicity, assume that players take action sequentially in order. Assume first that $j<i$ i.e. player $j$ will take action before player $i$. Therefore, player $j$ did not take its best-response game because he can insure better payoff by choosing not to transmit. Hence, we obtain $j\geq i$. Assume now $j>i$. Since player $i$ is taken its best action then it will transmit. According to theorem \ref{thhhh} player $j$ is unable to transmit otherwise the outcome will not be a NE of the game. Therefore we obtain $i=j$. In other words, the only outcome of the best-response algorithm is the PONE of the game.
\end{proof}

\section{Simulation Results}\label{sec:simul}

In this section, we first present the simulation results comparing the delay encountered by players when applying the PMP system \cite{confarxiv} against the distributed non decentralized cooperative data exchange. We, then, compare the delay experienced by clients against the player-player packet erasure probability relatively to the base station-player packet erasure probability since the short range communications are more reliable than the base station-player communications \cite{6620795,6404743}.

In these simulations, the delay is computed over a large number of iterations and the average value is presented. We assume that the packet erasure probability remains constant during a delivery period and change from iteration to iteration while keeping its mean, $P$ of the player-player and $Q$ for the base BS-players, constant. We also assume that each player have perfect knowledge of the packet erasure probabilities linking him to the other player (i.e. the estimation of this probability is perfect).

\begin{figure}[t]
\centering
  \includegraphics[width=1\linewidth]{./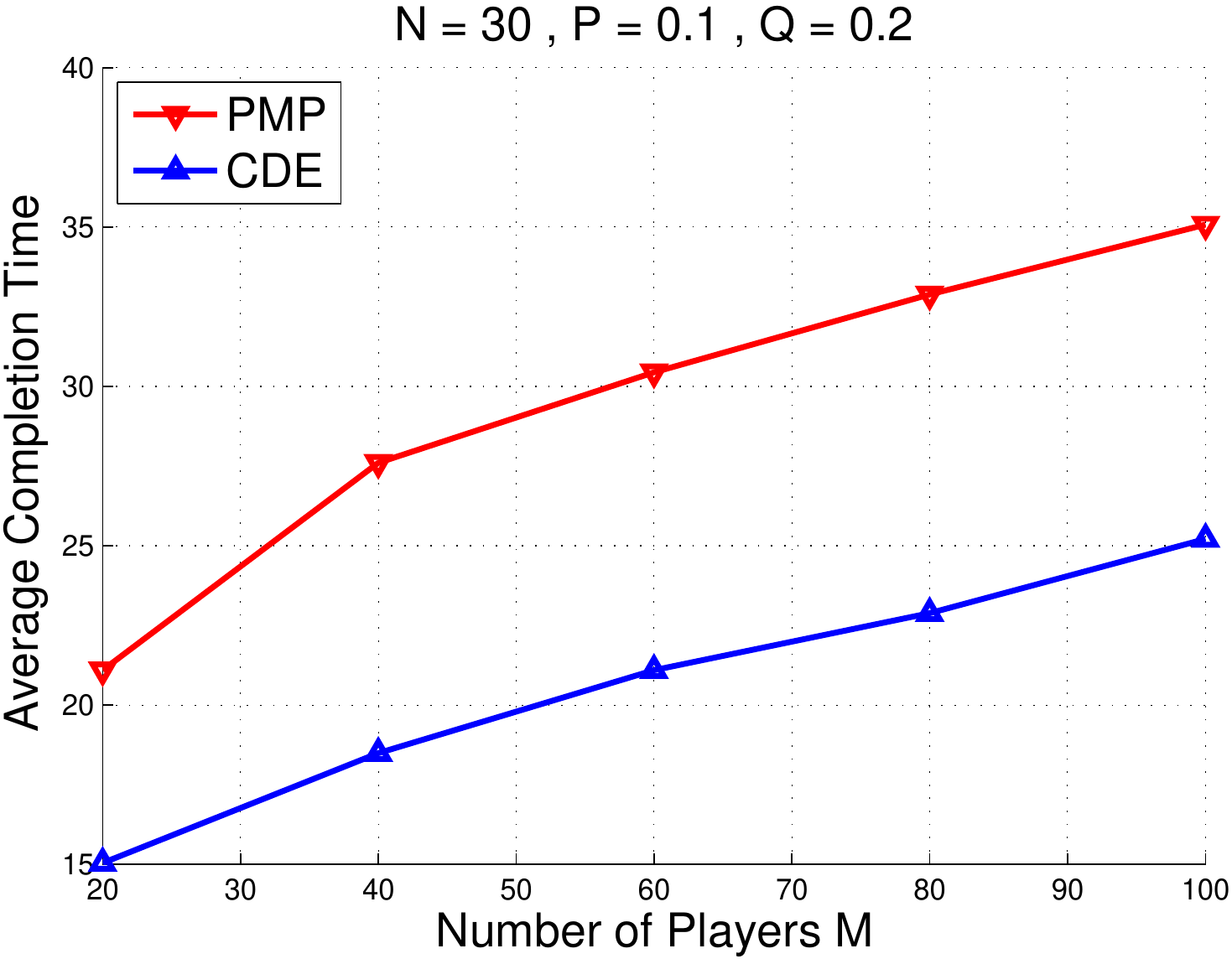}\\
  \caption{Mean completion time versus number of players $M$.}\label{fig:MCT}
\end{figure}

\begin{figure}[t]
\centering
  \includegraphics[width=1\linewidth]{./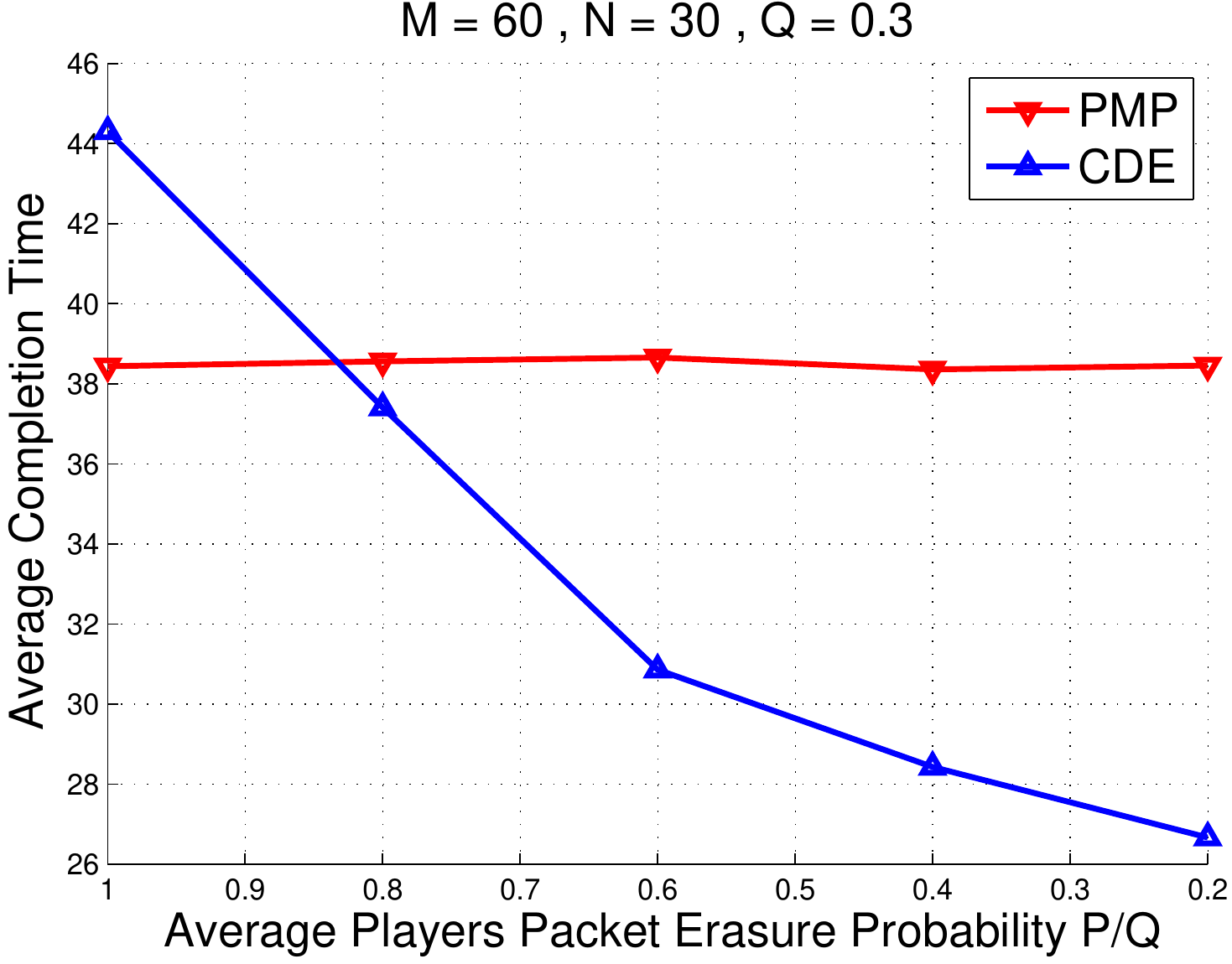}\\
  \caption{Mean completion time versus players erasure $P$.}\label{fig:PQCT}
\end{figure}

\fref{fig:MCT} depicts the comparison of the average completion time achieved by the PMP scheme and our distributed CDE scheme against the number of players $M$ for $N=30$, $Q=0.2$, and $P=0.1$. \fref{fig:PQCT} presents the mean completion time against the ratio of the player-player and BS-player erasure probability $P/Q$ for $M=60$, $N=30$, and $Q=0.3$. 

From all the figures, we can clearly see that our cooperative data exchange algorithms outperform the traditional PMP approach. \fref{fig:MCT}, illustrates the gain in using a distributed algorithm when the player-player channel conditions is better than the BS-player channel ($P=0.5Q$).

\fref{fig:PQCT} illustrates the mean completion time against the player-player erasure probability for a fixed BS-player erasure. In this configuration, for the same erasure probability, the PMP scheme outperform the distributed approach. This can be explained first by the fact that the BS has all the packets whereas any one player has only a subset and thus, in general, the BS has better ability to form coding combinations that target more players than any single player. It can also be explained by the fact that in our CDE approach, the approximation of the completion time using the decoding delay approach requires the erasure probability between the sender in large sense and the player. For the PMP scheme only the base station is transmitting and therefore this probability is fixed (from BS to players). However in the CDE scheme, all players can transmit and we approximate the probability by the average erasure linking each player to all the other players. This approximation degrades the scheme. However as the channel linking players become better, CDE starts to outperform even if players send less efficient coding combinations and we clearly can see the difference between our scheme and the PMP one.  

\section{Conclusion and Future Perspectives}\label{sec:concl}

In this paper, we formulated the problem of minimizing the completion time of instantly decodable network coding for cooperative data exchange in decentralized wireless network as cooperative control game. We employed game theory as a tool to improve the distributed solution by overcoming the need for a central controller or additional signaling in the system. We modeled the session by self-interested players in a non-cooperative potential game. The utility functions is designed in such way that increasing individual payoff results in a collective behavior achieving both a desirable system performance and the Pareto optimal solution. We compared our approach to the one of the conventional point-to-multipoint recovery process. Numerical results showed that our formulation largely outperforms the conventional PMP scheme in most practical scenarios and achieved a lower delay. The advantage of this formulation is that it can be easily extended. For example, this formulation can be extended to the case where not all the players are in the range of each other. In other words, the utility function will not be completely defined for the players. Another interesting research direction is the multicast cast with limited range. In this scenario, the packet demand of each player can differ and players are not all in the transmission range of each other. Finally, the case of imperfect feedback is another important and more practical extension.

\appendices

\numberwithin{equation}{section}

\section{Proof of Theorem 3}
Note that the cost function can be written as:
\begin{align}
\phi^\prime(\underline{a}_t,\underline{h}_t) = \phi^\prime(\underline{a}_{t-1},\underline{h}_{t-1}) + \xi(\underline{a}_{t},\underline{h}_t),
\end{align}
with:
\begin{align}
\xi(\underline{a}_{t},\underline{h}_t) = ||\underline{\mathcal{C}}(\underline{a}_t,\underline{h}_t)||_\infty - ||\underline{\mathcal{C}}(\underline{a}_{t-1},\underline{h}_{t-1})||_\infty .
\end{align}
Let $Q(t)$ be the set of players that can potentially increase the cost function at stage $t$ of the game. The mathematical definition of this set is the following:
\begin{align}
&Q(t) = \{ i \in \mathcal{M} \text{ such that }  \\
& \qquad \mathcal{C}_i(\underline{a}_{t-1},\underline{h}_{t-1}) + 1/(1-\overline{p}_i) > ||\underline{\mathcal{C}}(\underline{a}_{t-1},\underline{h}_{t-1})||_\infty \nonumber \\
& \hspace{2cm} \text{ and } M^w_i = 1\}. \nonumber
\end{align} 

Clearly, if $Q(t)= \varnothing$, then any action profile $\underline{a}_{t}^* \in \mathcal{A}(t)$ is a NE since all the profiles will not change the cost function $\phi^\prime(\underline{a}_t^*,\underline{h}_t) = \phi^\prime(\underline{a}_{t-1},\underline{h}_{t-1})$. In that case, we have $PoA(t) = 1$. Now assume $Q(t) \neq \varnothing$. For action profiles $\underline{a}_{t}$ the cost function will increase according to the norm of the action profile by the quantity:
\begin{align}
&\phi^\prime(\underline{a}_t,\underline{h}_t) - \phi^\prime(\underline{a}_{t-1},\underline{h}_{t-1}) = \nonumber \\
&\begin{cases}
\underset{i \in Q(t)}{max }\cfrac{1}{1-\overline{p}_i} \hspace{0.2cm} &\text{if } ||\underline{a}_t||_1 \neq 1 \\
\underset{i \in Q(t) \cap \underline{\mathcal{D}}_{w_j,\kappa^j}(t)}{max }\cfrac{1}{1-\overline{p}_i} \hspace{0.2cm} &\text{if } ||\underline{a}_t||_1 = a_j(t) = 1.
\end{cases}
\end{align}
Let $Z(t)$ be the set of players that can target players in the critical set $Q(t)$ and reduce the increase in the cost function. This set is defined as:
\begin{align}
Z(t) = \{ &j \in \mathcal{M} \text{ such that } \nonumber \\
& \underset{i \in Q(t) \cap \underline{\mathcal{D}}_{w_j,\kappa^j}(t)}{max }\cfrac{1}{1-\overline{p}_i} < \underset{i \in Q(t)}{max }\cfrac{1}{1-\overline{p}_i}\}.
\end{align}
Two cases can be distinguished:
\begin{itemize}
\item $Z(t) = \varnothing$ : all action profile $\underline{a}_t^*$ will yield the same cast function $\phi^\prime(\underline{a}_t^*,\underline{h}_t) = \phi^\prime(\underline{a}_{t-1},\underline{h}_{t-1}) + \underset{i \in Q(t)}{max }\cfrac{1}{1-\overline{p}_i}$. Therefore all action profile are NE of the game and the PoA is equal to $1$.
\item $Z(t) \neq 0$ : action profile $\underline{a}_t^*=a_j(t)$ such that $j\in Z(t)$ will yield a lower cost function than the other profiles.
\end{itemize}

Define $Y_0(t) = \underset{i \in Q(t)}{max }\cfrac{1}{1-\overline{p}_i}$ as the increase in the cost function when not exactly one player is transmitting and $Y_j(t) = \underset{i \in Q(t) \cap \underline{\mathcal{D}}_{w_j,\kappa^j}(t)}{max }\cfrac{1}{1-\overline{p}_i}$ the increase when only player $j$ is transmitting. Clearly for action profiles $\underline{a}_t^*$ such that $||\underline{a}_t^*||_1=0$ are not NE since the unilateral deviation of player $i \in Z(t)$ will decrease the cost function. For action profile $\underline{a}_t^*$ such that $||\underline{a}_t^*||_1>2$, for any unilateral deviation, we have $||\underline{a}_{t,i},\underline{a}_{t,-i}^*||>1$. Therefore, the cost function is unchanged and all these action profiles are NE. Let $\underline{a}_t^*$ be an action profile such that $||\underline{a}_t^*||_1=a_j(t)=1$, then the difference in the cost function for any unilateral deviation of player is:
\begin{align}
&\phi^\prime(\underline{a}_{t,i},\underline{a}_{t,-i}^,\underline{h}_{t}) - \phi^\prime(\underline{a}_t^*,\underline{h}_t) \nonumber \\
&= \xi(\underline{a}_{t,i},\underline{a}_{t,-i}^*,\underline{h}_t) - \xi(\underline{a}_{t}^*,\underline{h}_t) =Y_0(t) - Y_j(t).
\end{align} 

Thus, any unilateral deviation will yield the same or a higher cost function. These profiles are NE of the game. Let the action profile be $\underline{a}_{t}^*$ with $||\underline{a}_{t}^*||_1=a_i(t)+a_j(t)=2$. Two scenarios can occur:
\begin{itemize}
\item $i \notin Z(t)$ and $j \notin Z(t)$: By the same argument than for the case where $||\underline{a}_{t}^*||_1>2$, any unilateral deviation will yield the same cost $\phi^\prime(\underline{a}_t^*,\underline{h}_t) = \phi^\prime(\underline{a}_{t-1},\underline{h}_{t-1})+Y_0(t)$. Therefore it is a NE of the game.
\item $i \in Z(t)$ or $j \in Z(t)$: By the same argument than for the case $\underline{a}_{t}^*=\underline{0}$, the cost function can be decreased by unilateral deviation of player $i$ if $j \in Z(t)$ and inversely. This scenario is not a NE of the game.
\end{itemize}
We now characterize the set of all NE of the game:
\begin{align}
E(t) =  
\begin{cases}
\mathcal{A}(t) \text{ if } Z(t) = \varnothing \\
E_1(t)  \text{ otherwise },
\end{cases} 
\end{align}
where
\begin{align}
&E_1(t) = \{ \underline{a}_{t} \in \mathcal{A}(t)  \text{ such that } ||\underline{a}_{t}||_1=1 \text{ or } ||\underline{a}_{t}||_1>2  \nonumber\\
& \qquad \text{ or } (||\underline{a}_{t}||_1=a_i(t)+a_j(t)=2 \text{ and } i,j \notin Z(t))\}.
\end{align}
The \emph{PoA} of Game 3 can be expressed as follows:
\begin{align}
&PoA(t) = \nonumber \\
&\begin{cases}
1 \hspace{0.5cm} &\text{if } Z(t) = \varnothing \\
1 - \cfrac{Y_0(t) - \underset{j \in Z(t)}{min }(Y_j(t))}{\phi^\prime (\underline{a}_{t-1},\underline{h}_{t-1})+Y_0(t)} \hspace{0.5cm} &\text{otherwise }.
\end{cases}
\end{align}

\section{Proof of Theorem 4}
As for game 3, the utility function of Game 4 can be written as:
\begin{align}
\phi^\prime(\underline{a}_t,\underline{h}_t) = \phi^\prime(\underline{a}_{t-1},\underline{h}_{t-1}) + \xi(\underline{a}_{t},\underline{h}_t),
\end{align}
with:
\begin{align}
\xi(\underline{a}_{t},\underline{h}_t) &=  ||\underline{a}_t||_1 + \cfrac{||\underline{\mathds{D}}(\underline{a}_t,\underline{h}_t)-\underline{\mathds{D}}(\underline{a}_{t-1},\underline{h}_{t-1})||_1}{M} \nonumber \\
& \quad +  ||\underline{\mathcal{C}}(\underline{a}_t,\underline{h}_t)||_\infty - ||\underline{\mathcal{C}}(\underline{a}_{t-1},\underline{h}_{t-1})||_\infty .
\end{align}
Let $Q(t)$ be the set of players that can potentially increase the expected completion time at stage $t$ of the game. The mathematical definition of this set is the following:
\begin{align}
&Q(t) = \{ i \in \mathcal{M} \text{ such that }  \\
& \qquad \mathcal{C}_i(\underline{a}_{t-1},\underline{h}_{t-1}) + 1/(1-\overline{p}_i) > ||\underline{\mathcal{C}}(\underline{a}_{t-1},\underline{h}_{t-1})||_\infty \nonumber \\
& \hspace{2cm} \text{ and } M^w_i = 1\}. \nonumber
\end{align} 
Assume $Q(t)= \varnothing$, then any action profile $\underline{a}_{t}^* \in \mathcal{A}(t)$ will not increase the maximum decoding delay and we have:
\begin{align}
\xi(\underline{a}_{t},\underline{h}_t) &=  ||\underline{a}_t||_1 + \cfrac{||\underline{\mathds{D}}(\underline{a}_t,\underline{h}_t)-\underline{\mathds{D}}(\underline{a}_{t-1},\underline{h}_{t-1})||_1}{M}  \\
=&\begin{cases}
\cfrac{||\underline{M}^w(t)||_1}{M} \hspace{0.1cm}& \text{if } ||\underline{a}_t||_1=0 \\
1 +  \cfrac{||\underline{\mathcal{D}}_{w_i,\kappa^i}(t)||_1}{M} \hspace{0.1cm}& \text{if } ||\underline{a}_t||_1=a_i(t)=1\\
||\underline{a}_t||_1 + \cfrac{||\underline{M}^w(t)||_1}{M} \hspace{0.1cm}& \text{otherwise } .
\end{cases} \nonumber
\end{align}

Clearly, we can see that all the profiles $\underline{a}_{t}^*$ with $||\underline{a}_{t}^*||_1>1$ or $||\underline{a}_{t}^*||_1=0$ are not NE of the game anymore. Only action profile with one entry not equal to $0$ are NE. Now assume $Q(t) \neq \varnothing$ and let $Z(t)$ be the set of players that can target players in the critical set $Q(t)$ and reduce the increase in the cost function. This set is defined as:
\begin{align}
Z(t) = \{ &j \in \mathcal{M} \text{ such that } \nonumber \\
& \underset{i \in Q(t) \cap \underline{\mathcal{D}}_{w_j,\kappa^j}(t)}{max }\cfrac{1}{1-\overline{p}_i} < \underset{i \in Q(t)}{max }\cfrac{1}{1-\overline{p}_i}\}.
\end{align}
Two cases can be distinguished:
\begin{itemize}
\item $Z(t) = \varnothing$ : the expected completion time will increase by the same amount $\underset{i \in Q(t)}{max }\cfrac{1}{1-\overline{p}_i}$ for all action  profiles.
\item $Z(t) \neq 0$ : some action profiles lead to lower increase of the expected completion time than others.
\end{itemize}

Define $Y_0(t) = \underset{i \in Q(t)}{max }\cfrac{1}{1-\overline{p}_i}$ as the increase in the cost function when not exactly one player is transmitting and $Y_j(t) = \underset{i \in Q(t) \cap \underline{\mathcal{D}}_{w_j,\kappa^j}(t)}{max }\cfrac{1}{1-\overline{p}_i}$ the increase when only player $j$ is transmitting.
If $Z(t) = \varnothing$ for an action profile $\underline{a}_{t}^*$, the cost function can be expressed as:
\begin{align}
&\xi(\underline{a}_{t}^*,\underline{h}_t) = Y_0(t) \nonumber \\
& +
\begin{cases}
\cfrac{||\underline{M}^w(t)||_1}{M} \hspace{0.1cm}& \text{if } ||\underline{a}_t||_1=0 \\
1 +  \cfrac{||\underline{\mathcal{D}}_{w_i,\kappa^i}(t)||_1}{M} \hspace{0.1cm}& \text{if } ||\underline{a}_t||_1=a_i(t)=1\\
||\underline{a}_t||_1 + \cfrac{||\underline{M}^w(t)||_1}{M} \hspace{0.1cm}& \text{otherwise } .
\end{cases}
\end{align}

Clearly action profiles $\underline{a}_{t}^*$ of type $||\underline{a}_{t}^*||_1 \neq 1$ are no more NE of the game. Now assume $Z(t) \neq \varnothing$ the utility varies with the chosen action profile. Let $\underline{a}_{t}^*$ be an action profile such that $||\underline{a}_{t}^*||_1=\alpha>1$. For some unilateral deviation of player that are transmitting, the cost function will decrease. In other words, we have:
\begin{align}
&\phi^\prime(\underline{a}_{t,i},\underline{a}^*_{t,-i},\underline{h}_t) - \phi^\prime(\underline{a}^*_{t},\underline{h}_t)  = \\
&\begin{cases}
1 > 0  \hspace{0.7cm}\text{if } ||\underline{a}_{t,i},\underline{a}^*_{t,-i}||_1 > 1 \\ 
Y_0(t)-Y_i(t) + 1 + \cfrac{||\underline{M}^w(t)|| - ||\underline{\mathcal{D}}_{w_i,\kappa^i}(t)||_1}{M}  > 0 \\
 \hspace{1.5cm}\text{if } ||\underline{a}_{t,i},\underline{a}^*_{t,-i}||_1 = 1 \text{ and } i \in Z(t) \\
1 + \cfrac{||\underline{M}^w(t)|| - ||\underline{\mathcal{D}}_{w_i,\kappa^i}(t)||_1}{M}  > 0 \\
\hspace{1.5cm}\text{if } ||\underline{a}_{t,i},\underline{a}^*_{t,-i}||_1 = 1 \text{ and } i \notin Z(t).
\end{cases}\nonumber
\end{align}

Thus, all action profile $\underline{a}_{t}^*$ with $||\underline{a}_{t}^*||_1>1$ are not NE and it is clear to see that action profile $\underline{a}_{t}^*$ such that $||\underline{a}_{t}^*||_1=0$ are also not NE of the game. Let $\underline{a}_{t}^*$  be an action profile such that $||\underline{a}_{t}^*||_1=a_i(t)=1$ and $i \notin Z(t)$. The difference in the cost when player $i$ deviates is:
\begin{align}
&\phi^\prime(\underline{a}^*_{t},\underline{h}_t) - \phi^\prime(\underline{a}_{t,i},\underline{a}^*_{t,-i},\underline{h}_t)  \nonumber \\
& \qquad = 1 + \cfrac{||\underline{\mathcal{D}}_{w_i,\kappa^i}(t)||_1 - ||\underline{M}^w(t)|| }{M} > 0.
\end{align}

Therefore such action profile is not a NE. Now consider the action profile $\underline{a}_{t}^*$  such that $||\underline{a}_{t}^*||_1=a_i(t)=1$ and $i \in Z(t)$. The difference in the utility if any player deviates:
\begin{align}
&\phi^\prime(\underline{a}_{t,i},\underline{a}^*_{t,-i},\underline{h}_t) - \phi^\prime(\underline{a}^*_{t},\underline{h}_t) =  \\
& \hspace{0.5cm} \begin{cases}
1 + Y_0(t)-Y_i(t) + \cfrac{ ||\underline{M}^w(t)|| - ||\underline{\mathcal{D}}_{w_i,\kappa^i}(t)||_1 }{M} > 0 \\
\hspace{1.5cm}\text{if } ||\underline{a}_{t,j},\underline{a}^*_{t,-j}||_1 = 2\\
Y_0(t)-Y_i(t) + \cfrac{ ||\underline{M}^w(t)|| - ||\underline{\mathcal{D}}_{w_i,\kappa^i}(t)||_1}{M} > 0  \\
\hspace{1.5cm} \text{if } ||\underline{a}_{t,j},\underline{a}^*_{t,-j}||_1 = 0.
\end{cases} \nonumber
\end{align}
The set of all NE of the Game 4 can be expressed as:
\begin{align}
&E(t) =  \\
&\begin{cases}
\{ \underline{a}_{t} \in \mathcal{A}(t)  \text{ s.t. } ||\underline{a}_{t}||_1=a_i(t)=1,i \in Z(t) \} \\
\hspace{6cm} \text{ if } Z(t) \neq \varnothing \\
\{ \underline{a}_{t} \in \mathcal{A}(t)  \text{ s.t. } ||\underline{a}_{t}||_1=1 \} \\
\hspace{6cm}\text{ otherwise}.
\end{cases}  \nonumber
\end{align}
The \emph{PoA} of Game 4 can be expressed as follows:
\begin{align}
&PoA^\prime(t) = \\
&\begin{cases}
\cfrac{\underset{||\underline{a}_t||_1=a_i(t)=1}{\text{min}}\phi^\prime (\underline{a}_{t-1},\underline{h}_{t-1})+\cfrac{||\underline{\mathcal{D}}_{\omega_i,\kappa^i}||_1}{M} +1 + Y_i(t) }{\underset{||\underline{a}_t||_1=a_i(t)=1}{\text{max}}\phi^\prime (\underline{a}_{t-1},\underline{h}_{t-1})+\cfrac{||\underline{\mathcal{D}}_{\omega_i,\kappa^i}||_1}{M} +1 + Y_i(t)}  \\
\hspace{6cm} \text{ if } Z(t) \neq \varnothing \\
\cfrac{\underset{||\underline{a}_t||_1=a_i(t)=1}{\text{min}}\phi^\prime (\underline{a}_{t-1},\underline{h}_{t-1})+\cfrac{||\underline{\mathcal{D}}_{\omega_i,\kappa^i}||_1}{M}  +1 + Y_0(t) }{\underset{||\underline{a}_t||_1=a_i(t)=1}{\text{max}}\phi^\prime (\underline{a}_{t-1},\underline{h}_{t-1})+\cfrac{||\underline{\mathcal{D}}_{\omega_i,\kappa^i}||_1}{M} +1 + Y_0(t)} \\
\hspace{6cm} \text{ otherwise }.
\end{cases} \nonumber
\end{align}
Moreover, the \emph{PoA} can be bounded by the following expressions:
\begin{align}
&1 \geq PoA^\prime(t) \geq 
\begin{cases}
1 - \cfrac{1+Y_0(t)-\underset{j \in Z(t)}{min }(Y_j(t))}{\phi^\prime (\underline{a}_{t-1},\underline{h}_{t-1})+2+Y_0(t)}  \hspace{0.1cm}  \\
\hspace{4cm} \text{if } Z(t) \neq \varnothing\\
1 - \cfrac{1}{\phi^\prime (\underline{a}_{t-1},\underline{h}_{t-1})+2+Y_0(t)} \hspace{0.1cm} \\
\hspace{4cm}\text{otherwise }.\\
\end{cases}
\end{align}

\bibliographystyle{IEEEtran}
\bibliography{references}

\end{document}